\begin{document}

\mainmatter  

\title{Computing Unique Maximum Matchings in $O(m)$ time for K\"{o}nig-Egerv\'{a}ry
Graphs and Unicyclic Graphs}

\titlerunning{Unique Maximum Matchings}

\authorrunning{Vadim E. Levit and Eugen Mandrescu}

\author{Vadim E. Levit\inst{1} \and Eugen Mandrescu\inst{2}}

\institute{Department of Computer Science and Mathematics \\ Ariel University, ISRAEL \\
\email{levitv@ariel.ac.il} \and Department of Computer Science \\ Holon Institute of Technology,
ISRAEL \\ \email{eugen\_m@hit.ac.il}}

\maketitle

\begin{abstract}
Let $\alpha\left(  G\right)  $ denote the maximum size of an independent set
of vertices and $\mu\left(  G\right)  $ be the cardinality of a maximum
matching in a graph $G$. A matching saturating all the vertices is a
\textit{perfect matching}. If $\alpha\left(  G\right)  +\mu\left(  G\right)
=\left\vert V(G)\right\vert $, then $G$ is called a
K\"{o}nig-Egerv\'{a}ry graph. A graph is \textit{unicyclic} if it has a unique cycle. \\

It is known that a maximum matching can be found in $O(m\bullet\sqrt{n})$ time for a
graph with $n$ vertices and $m$ edges. Bartha \cite{Bartha2010} conjectured that a unique perfect matching, if it
exists, can be found in $O(m)$ time. \\

In this paper we validate this conjecture for K\"{o}nig-Egerv\'{a}ry graphs
and unicylic graphs. We propose a variation of Karp-Sipser leaf-removal
algorithm \cite{KarpSpiser1981}, which ends with an empty graph if and only if
the original graph is a K\"{o}nig-Egerv\'{a}ry graph with a unique perfect
matching (obtained as an output as well). \\

We also show that a
unicyclic non-bipartite graph $G$ may have at most one perfect matching, and
this is the case where $G$ is a K\"{o}nig-Egerv\'{a}ry graph.
\keywords{unique perfect matching, K\"{o}nig-Egerv\'{a}ry graph,
unicyclic graph, Karp-Sipser leaf-removal
algorithm, core.}
\end{abstract}

\section{Introduction}

Throughout this paper $G$ is a simple (i.e., finite, undirected, loopless and
without multiple edges) graph with vertex set $V(G)$ and edge set $E(G)$. If
$X\subseteq V$, then $G[X]$ is the subgraph of $G$ induced by $X$. If $A,B$
$\subseteq V\left(  G\right)  $ and $A\cap B=\emptyset$, then $(A,B)$ stands
for the set
\[
\{e=ab:a\in A,b\in B,e\in E\left(  G\right)  \}.
\]
The \textit{neighborhood} $N(v)$\ of a vertex $v\in V\left(  G\right)  $ is
the set $\{u:u\in V$\ and $vu\in E\}$. For $A\subseteq V\left(  G\right)  $,
we denote
\[
N_{G}(A)=\{v\in V\left(  G\right)  -A:N(v)\cap A\neq\emptyset\}
\]
and $N_{G}[A]=A\cup N(A)$, or for short, $N(A)$ and $N[A]$. If $N(v)=\{u\}$,
then $v$ is a \textit{leaf} and $uv$ is a \textit{pendant edge} of $G$. Let
$\mathrm{leaf}(G)$\ stand for the set of all leaves in $G$.
A graph is \textit{unicyclic} if it has a unique cycle.
Unicyclic graphs keep enjoying plenty of interest, as one can see, for instance,
in \cite{Belar2010,LevMan2012a,LevMan2014,Li2010,Wu2010,Zhai2010}.

An \textit{independent} set in $G$ is a set of pairwise non-adjacent vertices.
An independent set of maximum size is a \textit{maximum independent set} of
$G$, and $\alpha(G)$ is the cardinality of a maximum independent set in $G$.
Let $\Omega(G)$ stand for the set of all maximum independent sets of $G$, and
\textrm{core}$(G)=%
{\displaystyle\bigcap}
\{S:S\in\Omega(G)\}$ \cite{levm3}.

A \textit{matching} in a graph $G$ is a set $M\subseteq E\left(  G\right)  $
such that no two edges of $M$ share a common vertex. A \textit{maximum
matching} is a matching of maximum cardinality. By $\mu(G)$ is denoted the
size of a maximum matching. A matching is \textit{perfect} if it saturates all
the vertices of the graph.

$G$ is a \emph{K\"{o}nig-Egerv\'{a}ry graph} provided
$\alpha(G)+\mu(G)=\left\vert V(G)\right\vert $ \cite{Dem,Ster}. As a
well-known example, every bipartite graph is a K\"{o}nig-Egerv\'{a}ry graph
\cite{Eger,Koen}. Several properties of K\"{o}nig-Egerv\'{a}ry graphs are presented in
\cite{Korach2006,Larson2011,LevMan2012b,LevMan2013,Lov1983}.

\begin{theorem}
\cite{LevMan2001}\label{th2} A connected bipartite graph $G$ has a perfect
matching if and only if \textrm{core}$(G)=\emptyset$.
\end{theorem}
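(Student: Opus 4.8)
The plan is to prove both implications, the forward one being essentially immediate and the converse reducing to a characterization of $\textrm{core}(G)$ in terms of maximum matchings.

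First, the easy direction. Suppose $G$ has a perfect matching and let $(A,B)$ be its bipartition. A perfect matching pairs the vertices of $A$ with those of $B$, forcing $|A|=|B|=|V(G)|/2$ and $\mu(G)=|V(G)|/2$. Since every bipartite graph is K\"{o}nig-Egerv\'{a}ry, $\alpha(G)=|V(G)|-\mu(G)=|V(G)|/2$, so both color classes $A$ and $B$ are maximum independent sets. As $A\cap B=\emptyset$, we get $\textrm{core}(G)\subseteq A\cap B=\emptyset$.

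For the converse I would argue through a vertex-deletion description of the core. The key observation is that for a bipartite $G$ every vertex-deleted subgraph $G-v$ is again bipartite, hence K\"{o}nig-Egerv\'{a}ry, so $\alpha(G-v)=|V(G)|-1-\mu(G-v)$ while $\alpha(G)=|V(G)|-\mu(G)$. Combining this with the general fact that $v\in\textrm{core}(G)$ if and only if $\alpha(G-v)=\alpha(G)-1$ (no maximum independent set avoids $v$ precisely when removing $v$ drops the independence number, and the drop is always at most one), a one-line computation gives $v\in\textrm{core}(G)$ if and only if $\mu(G-v)=\mu(G)$, i.e. if and only if some maximum matching of $G$ misses $v$. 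In other words, $\textrm{core}(G)$ is exactly the set of vertices left unsaturated by at least one maximum matching. Granting this, the theorem follows: if $\textrm{core}(G)=\emptyset$, then no maximum matching misses any vertex, so any maximum matching $M$ saturates every vertex and is therefore perfect. (This simultaneously re-derives the forward direction, since a perfect matching has cardinality $|V(G)|/2=\mu(G)$, whence every maximum matching saturates all vertices and the core is empty.)

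I expect the main obstacle to be precisely the translation step establishing $\textrm{core}(G)=\{v:\mu(G-v)=\mu(G)\}$, because this is where bipartiteness is indispensable: the argument needs $G-v$ to satisfy $\alpha+\mu=|V|$, which holds only because $G-v$ inherits bipartiteness, whereas a general K\"{o}nig-Egerv\'{a}ry graph can lose that property under vertex deletion. Connectivity, by contrast, plays no essential role in the equivalence itself. The remaining ingredients, namely the deletion bounds $\alpha(G)-1\le\alpha(G-v)\le\alpha(G)$ and the elementary fact that in a graph admitting a perfect matching every maximum matching is perfect, are routine.
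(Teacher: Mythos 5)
Your proof is correct, but there is nothing in the paper to compare it against: Theorem \ref{th2} is stated as a quoted result from \cite{LevMan2001} and never proved here. Judged on its own terms, your argument is sound and complete. The forward direction is airtight: a perfect matching forces $\left\vert A\right\vert =\left\vert B\right\vert =\mu(G)=\left\vert V(G)\right\vert /2$, the K\"{o}nig-Egerv\'{a}ry identity gives $\alpha(G)=\left\vert V(G)\right\vert /2$, so both colour classes are maximum independent sets with empty intersection. For the converse, your chain of equivalences is valid: $v\in\mathrm{core}(G)$ if and only if $\alpha(G-v)=\alpha(G)-1$ (a general fact, since deleting a vertex drops $\alpha$ by at most one, and it drops exactly when no maximum independent set avoids $v$), and because $G-v$ is again bipartite, hence again K\"{o}nig-Egerv\'{a}ry, this translates into $\mu(G-v)=\mu(G)$, i.e., into $v$ being missed by some maximum matching; so $\mathrm{core}(G)=\emptyset$ means every maximum matching is perfect, and maximum matchings always exist. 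You also correctly locate where bipartiteness is indispensable, namely the closure of the class under vertex deletion; the paper itself corroborates this immediately after the theorem by exhibiting the non-bipartite K\"{o}nig-Egerv\'{a}ry graphs $G_{1},G_{2}$ of Figure \ref{fig12}, which have perfect matchings yet nonempty cores. Finally, your remark that connectivity is inessential is accurate: your argument never invokes it, and the disconnected case would in any event follow componentwise, since both the core and the existence of a perfect matching decompose over connected components.
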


Theorem \ref{th2} may fail for non-bipartite K\"{o}nig-Egerv\'{a}ry graphs;
e.g., the graphs $G_{1}$ and $G_{2}$ from Figure \ref{fig12} have
\textrm{core}$(G_{1})=\{a\}$, and \textrm{core}$(G_{2})=\{u\}$.

\begin{figure}[h]
\setlength{\unitlength}{1.0cm}
\begin{picture}(5,1.5)\thicklines
\multiput(1.5,0)(1,0){4}{\circle*{0.29}}
\multiput(2.5,1)(1,0){2}{\circle*{0.29}}
\put(1.5,0){\line(1,0){3}}
\put(3.5,1){\line(1,-1){1}}
\put(2.5,1){\line(1,0){1}}
\put(2.5,0){\line(0,1){1}}
\put(1.2,0){\makebox(0,0){$a$}}
\put(3,1.6){\makebox(0,0){$G_{1}$}}
\multiput(6,0)(1,0){6}{\circle*{0.29}}
\multiput(7,1)(1,0){2}{\circle*{0.29}}
\multiput(10,1)(1,0){2}{\circle*{0.29}}
\put(6,0){\line(1,0){5}}
\put(7,0){\line(0,1){1}}
\put(7,1){\line(1,0){1}}
\put(8,1){\line(1,-1){1}}
\put(10,0){\line(0,1){1}}
\put(11,0){\line(0,1){1}}
\put(10,1){\line(1,0){1}}
\put(5.7,0){\makebox(0,0){$u$}}
\put(8.5,1.6){\makebox(0,0){$G_{2}$}}
\end{picture}\caption{Both $G_{1}$ and $G_{2}$ are K\"{o}nig-Egerv\'{a}ry
graphs with perfect matchings.}
\label{fig12}
\end{figure}
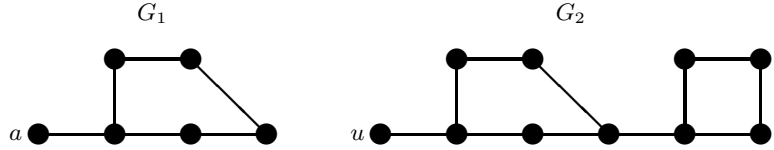

In a K\"{o}nig-Egerv\'{a}ry graph, maximum matchings have a special property,
emphasized by the following statement.

\begin{lemma}
\cite{LevMan2003}\label{match} Every maximum matching $M$ of a
K\"{o}nig-Egerv\'{a}ry graph $G$ is contained in each $(S,V\left(  G\right)
-\nolinebreak S)$ and $\left\vert M\right\vert =\left\vert V\left(  G\right)
-S\right\vert $, where $S\in\Omega(G)$.
\end{lemma}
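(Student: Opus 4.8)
The plan is to fix an arbitrary $S\in\Omega(G)$ and a maximum matching $M$, and to exploit the defining equality $\alpha(G)+\mu(G)=\left\vert V(G)\right\vert$ twice: once for the cardinality claim and once, via a counting argument, to locate the edges of $M$. First I would settle the cardinality. Since $\left\vert S\right\vert=\alpha(G)$ and $\left\vert M\right\vert=\mu(G)$, the K\"{o}nig-Egerv\'{a}ry identity gives
\[
\left\vert V(G)-S\right\vert=\left\vert V(G)\right\vert-\left\vert S\right\vert=\alpha(G)+\mu(G)-\alpha(G)=\mu(G)=\left\vert M\right\vert,
\]
so $\left\vert M\right\vert=\left\vert V(G)-S\right\vert$ holds for every $S\in\Omega(G)$, independently of where the edges of $M$ sit.

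Next I would prove the containment $M\subseteq(S,V(G)-S)$. Because $S$ is independent, no edge of $G$ --- and in particular no edge of $M$ --- has both endpoints in $S$; hence each edge of $M$ either crosses between $S$ and $V(G)-S$ or lies entirely inside $V(G)-S$. I would partition $M$ accordingly as $M=M_{1}\cup M_{2}$, where $M_{1}=M\cap(S,V(G)-S)$ and $M_{2}$ consists of the edges of $M$ with both ends in $V(G)-S$. A crossing edge in $M_{1}$ saturates exactly one vertex of $V(G)-S$, while an edge in $M_{2}$ saturates two; since $M$ is a matching these saturated vertices are pairwise distinct, so the number of vertices of $V(G)-S$ touched by $M$ equals $\left\vert M_{1}\right\vert+2\left\vert M_{2}\right\vert$ and cannot exceed $\left\vert V(G)-S\right\vert$. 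Using $\left\vert V(G)-S\right\vert=\left\vert M\right\vert=\left\vert M_{1}\right\vert+\left\vert M_{2}\right\vert$ from the first step, this yields
\[
\left\vert M_{1}\right\vert+2\left\vert M_{2}\right\vert\leq\left\vert M_{1}\right\vert+\left\vert M_{2}\right\vert,
\]
whence $\left\vert M_{2}\right\vert\leq0$, i.e. $M_{2}=\emptyset$. Therefore $M=M_{1}\subseteq(S,V(G)-S)$, as required.

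The argument is short, and its only delicate point is the counting inequality in the second step: one must notice that the K\"{o}nig-Egerv\'{a}ry equality forces $\left\vert V(G)-S\right\vert$ to be \emph{exactly} $\left\vert M\right\vert$, leaving no room for any edge of $M$ to consume two vertices of $V(G)-S$. I expect no further obstacle, since the independence of $S$ automatically rules out the remaining type of edge, and the two conclusions then follow together.
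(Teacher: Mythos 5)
Your proof is correct, and it is complete: the cardinality computation from the K\"{o}nig-Egerv\'{a}ry identity, followed by the counting argument showing that no edge of $M$ can have both endpoints in $V(G)-S$, is exactly the standard way this lemma is established. Note that the paper itself offers no proof of this statement --- it is quoted from the reference [LevMan2003] as a known result --- so there is no in-paper argument to compare against; your self-contained derivation fills that gap soundly, and the key observation (that $\left\vert V(G)-S\right\vert = \left\vert M\right\vert$ exactly, leaving no slack for an edge of $M$ to use two vertices of the complement) is precisely the point on which the lemma turns.
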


If for every two incident edges of a cycle $C$ exactly one of them belongs to
a matching $M$, then $C$ is called an $M$\textit{-alternating cycle
}\cite{Krogdahl}. It is clear that an $M$-alternating cycle should be of even
length. A matching $M$ in $G$ is called \textit{alternating cycle-free} if $G$
has no $M$-alternating cycle. For example, the matching $\{ab,cd,ef\}$ of the
graph $G$ from Figure \ref{fig111} is alternating cycle-free.

\begin{figure}[h]
\setlength{\unitlength}{1.0cm}
\begin{picture}(5,1.5)\thicklines
\multiput(2,0)(1,0){4}{\circle*{0.29}}
\multiput(3,1)(1,0){3}{\circle*{0.29}}
\put(2,0){\line(1,0){3}}
\put(4,0){\line(1,1){1}}
\put(3,1){\line(1,0){1}}
\multiput(3,0)(1,0){2}{\line(0,1){1}}
\put(1.7,0){\makebox(0,0){$a$}}
\put(2.7,0.3){\makebox(0,0){$b$}}
\put(3.7,0.3){\makebox(0,0){$c$}}
\put(5.3,0){\makebox(0,0){$d$}}
\put(2.7,1){\makebox(0,0){$e$}}
\put(4.3,1){\makebox(0,0){$f$}}
\put(5.3,1){\makebox(0,0){$g$}}
\put(1.1,0.5){\makebox(0,0){$G$}}
\multiput(7.5,0)(1,0){3}{\circle*{0.29}}
\multiput(8.5,1)(1,0){3}{\circle*{0.29}}
\put(7.5,0){\line(1,0){2}}
\put(8.5,1){\line(1,0){2}}
\multiput(8.5,0)(1,0){2}{\line(0,1){1}}
\put(7.2,0){\makebox(0,0){$u$}}
\put(8.2,0.3){\makebox(0,0){$v$}}
\put(9.8,0){\makebox(0,0){$t$}}
\put(10.85,1){\makebox(0,0){$w$}}
\put(8.2,1){\makebox(0,0){$y$}}
\put(9.8,0.7){\makebox(0,0){$x$}}
\put(6.7,0.5){\makebox(0,0){$H$}}
\end{picture}\caption{The unique cycle of $H$ is alternating with respect to
the matching $\{yv,tx\}$.}
\label{fig111}
\end{figure}
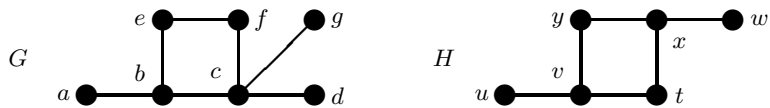

A matching
\[
M=\{a_{i}b_{i}:a_{i},b_{i}\in V(G),1\leq i\leq k\}
\]
of graph $G$ is called \textit{a uniquely restricted matching} if $M$ is the
unique perfect matching of $G[\{a_{i},b_{i}:1\leq i\leq k\}]$ \cite{GolHiLew}.
For bipartite graphs, this notion was first introduced\emph{\ }in \cite{Krogdahl}, under the
name \textit{clean matching}. It appears also in the context of matrix theory,
as a \textit{constrained matching} \cite{HerSchn}.

\begin{theorem}
\cite{GolHiLew}\label{th9} A matching is uniquely restricted if and only if it
is \textit{alternating} cycle-free.
\end{theorem}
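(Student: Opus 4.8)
The plan is to prove both implications by contraposition, with the symmetric difference of matchings as the single driving tool. Throughout, write $H = G[\{a_i, b_i : 1 \le i \le k\}]$ for the subgraph induced by the vertices saturated by $M$. By definition $M$ is a perfect matching of $H$, and $M$ is uniquely restricted precisely when $M$ is the \emph{only} perfect matching of $H$; so the whole statement reduces to ``$H$ has a second perfect matching if and only if $G$ admits an $M$-alternating cycle.''

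For the direction showing that an alternating cycle destroys unique restriction, I would begin with an $M$-alternating cycle $C$. Since $C$ alternates, each of its vertices is incident to exactly one $M$-edge lying on $C$; hence every vertex of $C$ is $M$-saturated and therefore belongs to $V(H)$, so $C$ is in fact a cycle of $H$. I would then set $M' = M \triangle E(C)$. Because $C$ alternates and (as already noted in the text) has even length, $M'$ is again a matching saturating exactly the same vertex set as $M$, so it is a perfect matching of $H$; and $M' \neq M$ since the two disagree on every edge of $C$. Thus $H$ carries at least two perfect matchings, and $M$ is not uniquely restricted.

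For the converse, suppose $M$ is not uniquely restricted, so $H$ has a perfect matching $M' \neq M$, and consider $M \triangle M'$. Since $M$ and $M'$ saturate the identical vertex set $V(H)$, every vertex of $H$ has degree $0$ or $2$ in $M \triangle M'$: degree $0$ when the two matchings pick the same edge there, and degree $2$ (one edge of $M \setminus M'$, one of $M' \setminus M$) when they differ, the matching property forbidding any higher degree. Hence the nonempty graph $M \triangle M'$ is a disjoint union of cycles, each alternating between edges of $M$ and of $M'$; any one of them is an $M$-alternating cycle, so $M$ is not alternating cycle-free.

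I do not expect a genuine obstacle here, as the reasoning is essentially routine once the right object is isolated. The step bearing all the weight is the structural observation that the symmetric difference of two perfect matchings on a common vertex set decomposes into even alternating cycles; the parity and saturation claims then follow directly from the alternation condition and the definition of a matching, and the remaining verifications are immediate.
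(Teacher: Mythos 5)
Your proof is correct, but there is nothing in the paper to compare it against: Theorem \ref{th9} is stated as a known result, cited from Golumbic, Hirst and Lewenstein \cite{GolHiLew}, and the paper itself supplies no proof. Your symmetric-difference argument is the standard one for this equivalence, and both directions are handled soundly. In the forward direction, the key verification is exactly the one you make: every vertex of an $M$-alternating cycle $C$ is $M$-saturated, hence lies in $V(H)$, and since $H=G[\{a_i,b_i:1\le i\le k\}]$ is an \emph{induced} subgraph, the edges of $C$ lie in $H$ as well; from there, $M\triangle E(C)$ is a second perfect matching of $H$ because the $M$-edges meeting $V(C)$ all lie on $C$. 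In the converse, the decomposition of $M\triangle M'$ into vertex-disjoint cycles alternating between $M\setminus M'$ and $M'\setminus M$ (every vertex having degree $0$ or $2$, so no paths can occur) produces the required $M$-alternating cycle, and nonemptiness follows from $M\neq M'$. This is essentially the argument in the cited source, so your write-up could serve as a self-contained proof of the theorem within the paper; it is also consistent in spirit with the cycle-growing construction the authors use later in their proof of Lemma \ref{lem11}, which builds an $M$-alternating cycle directly rather than via a second matching.
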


For instance, all the maximum matchings of the graph $G$ in Figure
\ref{fig111} are uniquely restricted, while the graph $H$ from the same figure
has both uniquely restricted maximum matchings (e.g., $\{uv,xw\}$) and
non-uniquely restricted maximum matchings (e.g., $\{xy,tv\}$).

\begin{lemma}
\cite{Chechlarova1991} If a graph without isolated vertices has a unique
maximum matching, then this matching is perfect.
\end{lemma}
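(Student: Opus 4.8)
The plan is to argue by contradiction using a single edge exchange that produces a second maximum matching of the same cardinality. Suppose $G$ has no isolated vertices and $M$ is its unique maximum matching, yet $M$ is \emph{not} perfect. Then some vertex $v\in V(G)$ is left unsaturated by $M$.

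First I would exploit the hypothesis that $G$ has no isolated vertices: it guarantees that $v$ has at least one neighbor $u$. I claim $u$ must be saturated by $M$; otherwise both endpoints of the edge $uv$ would be exposed, and $M\cup\{uv\}$ would be a matching strictly larger than $M$, contradicting the maximality of $M$. Hence there is an edge $uw\in M$ saturating $u$.

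Next I would perform the exchange $M'=(M\setminus\{uw\})\cup\{uv\}$. The heart of the argument is to verify that $M'$ is again a maximum matching distinct from $M$. It is a matching because the only vertices involved in the change are $u$, $v$, $w$: the edge $uv$ shares no vertex with any surviving edge of $M$, since $v$ was exposed and the unique edge through $u$, namely $uw$, has been deleted. As we delete exactly one edge and add exactly one, $\left\vert M'\right\vert =\left\vert M\right\vert$, so $M'$ is maximum as well. Finally $M'\neq M$ because $uv\in M'$ while $uv\notin M$ (as $v$ is unsaturated by $M$). This contradicts the uniqueness of the maximum matching, forcing $M$ to be perfect.

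I expect the only genuinely delicate point to be the routine but necessary check that $M'$ is a legitimate matching, in particular that no edge of $M\setminus\{uw\}$ already uses $v$, together with the observation that the hypothesis \textquotedblleft without isolated vertices\textquotedblright\ is exactly what is needed to produce the neighbor $u$ in the first place. Everything else reduces to a direct cardinality comparison.
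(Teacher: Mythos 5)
Your proof is correct. The paper itself does not prove this lemma --- it is quoted from Cechl\'{a}rov\'{a}'s paper \cite{Chechlarova1991} without proof --- so there is no in-paper argument to compare against; on its own merits, your argument is the standard and complete one. The exchange $M'=(M\setminus\{uw\})\cup\{uv\}$ is legitimate for exactly the reasons you give: no edge of $M$ touches $v$ (it is exposed), and the only edge of $M$ touching $u$ is the deleted $uw$, so $M'$ is a matching of the same cardinality as $M$ but distinct from it, contradicting uniqueness. Both hypotheses are used where they must be: the absence of isolated vertices supplies the neighbor $u$, and maximality of $M$ forces $u$ to be saturated. Nothing is missing.
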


To find a maximum matching one needs $O(m\bullet\sqrt{n})$ time for a
graph with $n$ vertices and $m$ edges \cite{MiVa80}. If our goal is to check
whether a graph possesses a unique perfect matching, then we can do\ better. A
most efficient unique perfect matching algorithm runs in $O(m\bullet
log^{4}n)$ time \cite{GaKaTarjan2001}. An $O(m)$ algorithm is given for the
special cases of chestnut and elementary soliton graphs in
\cite{BarthaKresz2009}. It is known that bipartite graphs with a unique
maximum matching can be recognized by an $O(m)$ algorithm as well
\cite{Chechlarova1991}.

\begin{conjecture}
\label{conj}\cite{Bartha2010} For a graph of size $m$, a unique perfect
matching, if it exists, can always be found in $O(m)$ time.
\end{conjecture}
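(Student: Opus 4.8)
The statement is a conjecture about \emph{arbitrary} graphs, so the plan is to design a single forced-edge peeling procedure and argue both that it terminates with the unique perfect matching and that it admits a linear-time implementation. The plan is to treat each connected component separately, since a graph has a unique perfect matching if and only if every component does, and summing the per-component work preserves the $O(m)$ budget. Within a component I would build the whole matching as a sequence of \emph{forced} edges, where an edge $e=uv$ is forced precisely when $G-e$ has no perfect matching: committing such an edge, deleting $u$ and $v$, and recursing can neither destroy nor create a perfect matching. The two cheaply detectable sources of forced edges are (i) pendant edges, since if $v$ is a leaf with sole neighbor $u$ then $uv$ must be in every perfect matching, and (ii) bridges with both sides odd, since if $e=uv$ is a bridge and both components of $G-e$ have odd order then $u$ cannot be matched inside its own side and must be matched across $e$.

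The crucial structural input is the classical theorem of Kotzig: every connected graph possessing a unique perfect matching contains a bridge belonging to that matching. Moreover, if a bridge $e$ lies in a perfect matching then each side of $G-e$ loses exactly one matched endpoint to the crossing edge and hence has odd order, so Kotzig's bridge is automatically of type (ii). Consequently, as long as the current graph is nonempty and has a unique perfect matching, a forced edge of type (i) or (ii) always exists, and committing it preserves uniqueness downstairs, because any second perfect matching of the reduced graph would lift back to a second perfect matching of the original. First I would run a Karp--Sipser-style leaf-peeling pass \cite{KarpSpiser1981} driven by a queue of current leaves and degree counters, exhausting all type-(i) edges in $O(m)$; when it stalls at a subgraph of minimum degree at least $2$, I would invoke a bridge computation to locate a type-(ii) edge, commit it, delete its endpoints (which may reintroduce leaves and split the graph), and iterate. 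The procedure ends with the empty graph exactly when a unique perfect matching exists, while an isolated vertex or a failed odd--odd parity test certifies non-existence, which handles the ``if it exists'' clause.

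The entire difficulty of the conjecture is concentrated in the cost of repeatedly supplying a type-(ii) edge. A single static bridge and $2$-edge-connected-component decomposition costs $O(m)$, but deleting the endpoints of a committed bridge can shatter a $2$-edge-connected block into many new bridges, so recomputing the decomposition after every commit degrades to $\Theta(mn)$ in the worst case. The hard part will therefore be to maintain, under vertex deletions, both the bridge structure and the side-parities it induces, answering ``produce an odd--odd bridge'' queries within a total budget of $O(m)$. This is exactly the gap between the conjectured bound and the best known $O(m\log^{4}n)$ algorithm \cite{GaKaTarjan2001}, whose extra polylogarithmic factor arises from a fully dynamic connectivity structure. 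I expect that reaching $O(m)$ will require a special-purpose incremental structure exploiting the rigid shape of unique-perfect-matching graphs --- in particular the nested, tree-like arrangement of the successive Kotzig bridges --- rather than a general dynamic-connectivity black box; constructing such a structure, and thereby establishing the conjecture in full generality, is the main outstanding obstacle.
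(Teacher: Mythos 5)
Your structural skeleton is sound: leaf edges and odd--odd bridges are indeed forced, the lifting argument (a second perfect matching downstairs would lift to a second one upstairs) is correct, and Kotzig's theorem does guarantee that a type-(ii) edge is available whenever the leaf phase stalls on a graph that still has a unique perfect matching. But the proposal does not prove the statement, and you say so yourself. The entire content of the conjecture is the $O(m)$ bound; your plan reduces it to maintaining, under vertex deletions, a bridge decomposition together with the side-parities it induces, answering ``produce an odd--odd bridge'' queries within a total linear budget --- and that data-structural problem is left open. Diagnosing the hard kernel as ``the main outstanding obstacle'' is honest, but it is a research program, not a proof: with a general dynamic-connectivity black box your scheme lands at essentially the known $O(m\log^{4}n)$ bound of Gabow--Kaplan--Tarjan \cite{GaKaTarjan2001}, and with naive recomputation of bridges after each commit it degrades to $\Theta(mn)$, so nothing beyond the existing state of the art is established. (A smaller point: your definition of a forced edge, ``$G-e$ has no perfect matching,'' is fine only under the standing assumption that $G$ itself has one; and the claim that an isolated vertex or a parity failure ``certifies non-existence'' conflates non-existence of a perfect matching with non-uniqueness --- a $2$-edge-connected stall certifies only that the unique perfect matching does not exist, by Kotzig, without telling you which failure occurred.)

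You should also know that the paper never proves this statement in general either: it is stated as Conjecture \ref{conj} and only \emph{validated} for K\"{o}nig-Egerv\'{a}ry graphs and unicyclic graphs. The paper's route sidesteps your bottleneck entirely. By Lemma \ref{lem11}, a K\"{o}nig-Egerv\'{a}ry graph with a unique perfect matching always has a leaf (in every maximum independent set, in fact), and deleting a pendant edge together with its endpoints preserves both the K\"{o}nig-Egerv\'{a}ry property and uniqueness; hence the Karp--Sipser-style leaf queue never runs dry, type-(ii) edges are never needed, and Theorem \ref{th1} gives correctness with a trivially linear implementation. The unicyclic non-bipartite case is then folded into this one by showing that such a graph with a perfect matching is automatically a K\"{o}nig-Egerv\'{a}ry graph with that matching unique. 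So on the classes the paper actually handles, your first phase alone already suffices --- and your second phase is precisely where the general conjecture remains open, for you and for the paper alike.
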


In what follows, we validate Conjecture \ref{conj} for both
K\"{o}nig-Egerv\'{a}ry graphs and unicyclic graphs.

\section{Results}

According to Theorem \ref{th9}, if $M$ is a perfect matching in graph $G$,
then $M$ is unique if and only if no cycle of $G$ is alternating with respect
to $M$. Therefore, a perfect matching in a tree, if any, must be unique.

\begin{lemma}
\label{lem1}\cite{Chechlarova1991,LevMan45} If $G=(A,B,E)$ is a bipartite
graph having a unique perfect matching, then $A\cap\mathrm{leaf}%
(G)\neq\emptyset$ and $B\cap\mathrm{leaf}(G)\neq\emptyset$.
\end{lemma}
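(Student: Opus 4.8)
The plan is to exploit the unique perfect matching $M=\{a_ib_i:1\le i\le k\}$ (with $a_i\in A$, $b_i\in B$) by orienting $G$ so that a leaf in $A$ becomes a sink and a leaf in $B$ becomes a source of the resulting digraph, and then invoking the elementary fact that a finite nonempty acyclic digraph has both. Concretely, I would build a digraph $D$ on $V(G)$ by directing every matching edge from $B$ to $A$ (i.e. $b_i\to a_i$) and every non-matching edge from $A$ to $B$. By symmetry in the statement, getting both leaves simultaneously is exactly what this single orientation is designed to deliver.

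First I would record the degree bookkeeping. Each $a\in A$ receives precisely one incoming arc, namely its matching edge, so its out-degree in $D$ equals $\deg_G(a)-1$; dually, each $b\in B$ emits precisely one arc, its matching edge, so its in-degree equals $\deg_G(b)-1$. Hence a sink of $D$ can lie only in $A$, and it satisfies $\deg_G(a)-1=0$, i.e.\ it is a leaf belonging to $A$; symmetrically, a source of $D$ can lie only in $B$ and is a leaf belonging to $B$. So it remains only to guarantee that $D$ possesses a sink and a source, which is automatic once $D$ is shown to be acyclic.

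The crux of the argument is precisely this acyclicity. Since $G$ is bipartite, any directed cycle of $D$ alternates between $A$ and $B$, and therefore alternates between a matching arc ($B\to A$) and a non-matching arc ($A\to B$) at each of its vertices; that is exactly an $M$-alternating cycle. By Theorem~\ref{th9} the unique perfect matching $M$ is uniquely restricted, so $G$ admits no $M$-alternating cycle (equivalently, such a cycle $C$ would turn $M\bigtriangleup E(C)$ into a second perfect matching, contradicting uniqueness). Thus $D$ contains no directed cycle. Applying the standard source/sink fact for finite nonempty DAGs to $D$ (nonempty since $M$ saturates all vertices of a nontrivial $G$) yields a sink in $A$ and a source in $B$, giving the desired leaves. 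The only genuine obstacle is choosing the orientation so that directed cycles coincide with $M$-alternating cycles while the two sides of the bipartition play the roles of sink-side and source-side; after that, everything reduces to routine degree counting and the existence of sinks/sources in a DAG.
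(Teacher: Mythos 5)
Your proof is correct, but there is no in-paper proof of Lemma \ref{lem1} to compare it against: the paper states this lemma with citations to \cite{Chechlarova1991,LevMan45} only, and instead proves the companion result, Lemma \ref{lem11}, for K\"{o}nig-Egerv\'{a}ry graphs. Measured against that proof, your route is genuinely different in packaging though it runs on the same combinatorial engine. The paper argues by contradiction: assuming $S\cap\mathrm{leaf}(G)=\emptyset$ for a maximum independent set $S$, every vertex of $S$ has degree at least $2$, and one greedily walks along matched edges $a_ib_i$ and extra neighbors $b\in N(a_i)-S$ until a vertex repeats, producing an $M$-alternating cycle and contradicting Theorem \ref{th9}. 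Your directed paths out of $A$ are precisely those walks, and ``a finite nonempty DAG has a sink'' is the contrapositive of ``if every vertex of $A$ had positive out-degree, the walk never terminates and must close an alternating cycle.'' What your orientation buys is economy and symmetry: a single orientation delivers both conclusions at once, a sink being a leaf in $A$ and a source being a leaf in $B$, with no need to rerun the argument per side; your degree bookkeeping, the identification of directed cycles with $M$-alternating cycles (which does use bipartiteness), and the appeal to Theorem \ref{th9} (or the symmetric-difference argument you give parenthetically) are all accurate. What the paper's walk-based formulation buys in exchange is generality: in the K\"{o}nig-Egerv\'{a}ry setting of Lemma \ref{lem11} the set $V(G)-S$ need not be independent, so a global orientation of the type you define is unavailable (edges inside $V(G)-S$ have no natural direction, and directed cycles need not be alternating), whereas the walk only ever uses edges incident to $S$ and survives intact.
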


In other words, a bipartite graph with a unique perfect matching must have at
least two leaves. Notice that there exist non-bipartite graphs with unique
perfect matchings and without leaves. For an example, see the graph $G_{2}$
from Figure \ref{fig10}.

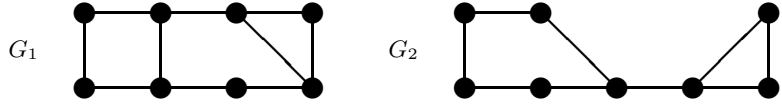
\begin{figure}[h]
\setlength{\unitlength}{1.0cm}
\begin{picture}(5,1.5)\thicklines
\multiput(2,0)(1,0){4}{\circle*{0.29}}
\multiput(2,1)(1,0){4}{\circle*{0.29}}
\put(2,0){\line(1,0){3}}
\put(2,0){\line(0,1){1}}
\put(4,1){\line(1,-1){1}}
\put(2,1){\line(1,0){3}}
\put(3,0){\line(0,1){1}}
\put(5,0){\line(0,1){1}}
\put(1.2,0.5){\makebox(0,0){$G_{1}$}}
\multiput(7,0)(1,0){5}{\circle*{0.29}}
\multiput(7,1)(1,0){2}{\circle*{0.29}}
\put(11,1){\circle*{0.29}}
\put(7,0){\line(1,0){4}}
\put(7,0){\line(0,1){1}}
\put(7,1){\line(1,0){1}}
\put(8,1){\line(1,-1){1}}
\put(10,0){\line(1,1){1}}
\put(11,0){\line(0,1){1}}
\put(6.2,0.5){\makebox(0,0){$G_{2}$}}
\end{picture}\caption{Both $G_{1}$ and $G_{2}$ have perfect matchings.}
\label{fig10}
\end{figure}

The following lemma, firstly presented in \cite{LevMan07}, shows that every
K\"{o}nig-Egerv\'{a}ry graph with a unique perfect matching has at least one
leaf (see, for example, the graph $G_{1}$, depicted in Figure \ref{fig12}). We
give a proof here for the sake of self-containment.

\begin{lemma}
\cite{LevMan07} \label{lem11}If $G$ is a K\"{o}nig-Egerv\'{a}ry graph with a
unique perfect matching, then $S\cap\mathrm{leaf}(G)\neq\emptyset$ holds for
every $S\in\Omega(G)$.
\end{lemma}

\begin{proof}
Let
\[
M=\{a_{i}b_{i}:1\leq i\leq\mu(G)\}
\]
be the unique perfect matching of $G$
and $S\in\Omega(G)$. Since $G$ is a K\"{o}nig-Egerv\'{a}ry graph, it follows
that
\[
\left\vert M\right\vert =\mu(G)=\alpha(G)=\left\vert S\right\vert .
\]
By Lemma \ref{match}, $M\subseteq(S,V\left(  G\right)  -S)$ and, therefore, we
may assume that
\[
S=\{a_{i}:1\leq i\leq\mu(G)\}.
\]
Suppose that $S\cap\mathrm{leaf}(G)=\emptyset$. Hence, $\left\vert
N(a_{i})\right\vert \geq2$ for every $a_{i}\in S$. Under these conditions, we
shall build an $M$-alternating cycle $C$. We begin with the edge $a_{1}b_{1}$;
since $\left\vert N(a_{1})\right\vert \geq2$, there is some $b\in
(V-S-\{b_{1}\})\cap N(a_{1})$, say $b_{2}$. We continue with $a_{2}b_{2}\in
M$. Further, $N(a_{2})$ contains some $b\in(V-S-\{b_{2}\})$. If $b_{1}\in
N(a_{2})$, we are done, because $G[\{a_{1},a_{2},b_{1},b_{2}\}]=C_{4}$.
Otherwise, we may suppose that $b=b_{3}$, and we add to the growing cycle the
edge $a_{3}b_{3}$. Since $G$ has a finite number of vertices, after a number
of edges from $M$, we must find some edge $a_{k}b_{j}$ having $1\leq j<k$. So,
the cycle $C$ we found has
\[
V(C)=\{a_{i},b_{i}:j\leq i\leq k\},\ \\
\]
\[
E(C)=\{a_{i}b_{i}:j\leq i\leq k\}\cup\{a_{i}b_{i+1}:j\leq i<k\}\cup
\{a_{k}b_{j}\}.
\]
Clearly, $C$ is an $M$-alternating cycle. Hence, by Theorem \ref{th9}, $M$ is
not unique, which contradicts the hypothesis on $M$.
\end{proof}

It is worth mentioning that Lemma \ref{lem11} may fail for K\"{o}nig-Egerv\'{a}ry
graphs having more than one perfect matching;
e.g., the graph $G_{1}$ from Figure \ref{fig10}.

Lemma \ref{lem11} plays a key-role in the following procedure checking whether
a K\"{o}nig-Egerv\'{a}ry graph has a unique perfect matching. It reads as
follows: as long as there a leaf $w$, add the edge connecting $w$ with its
only neighbor to a matching, and remove both vertices from the graph. If we
end up with the empty graph, then we have found a unique perfect matching, and
validated that our input is a K\"{o}nig-Egerv\'{a}ry graph. Otherwise, either
the graph is a non-K\"{o}nig-Egerv\'{a}ry graph, or it has more than one
maximum matching. Actually, this procedure is a variation of the Karp-Sipser
algorithm \cite{KarpSpiser1981}.

\begin{algorithm}[h!]
\label{alg:UPM}
\caption{Unique Perfect Matching}

\KwIn{A graph $G$;}
\KwOut{A unique perfect matching $M$ of $G$, and an evidence that
$G$ is a K\"{o}nig-Egerv\'{a}ry graph;

\qquad\qquad \ \textbf{otherwise}, a non-empty subgraph of $G$ without leaves.}

\BlankLine

{
    \nllabel{line1}Initialize a one-dimensional boolean array $Vertex[\ ]$ with
$Vertex\left[  i\right]  =True$ for $1\leq i\leq n$. It will be updated further as the set of vertices $V(G)$ changes. \\

    \nllabel{line1.1}Find the set $\mathrm{leaf}(G)$\ and present it like a \textbf{Queue}. \\

    \nllabel{line1.2}$M \leftarrow \emptyset$

    \While{$\mathrm{leaf}(G)\neq\emptyset$}
    {\nllabel{line5}

        \nllabel{line6}Take the first vertex from $\mathrm{leaf}(G)$, say $v$.\\

            \If{$v\in V\left(  G\right)  $ (or, in other words, \textbf{if} $Vertex\left[  v\right]  =True$)}
            {
                $V\left(  G\right)  \leftarrow V\left(  G\right) -N\left[  v\right]$ \\
                $M \leftarrow M\cup\left(  v,N\left(  v\right)  \right)$ \\
                $\mathrm{leaf}(G) \leftarrow \mathrm{leaf}(G)-v$ \\
                Add all new leaves of $G$ from $N\left(N\left(  v\right)  \right)  -v$ to $\mathrm{leaf}(G)$.

            }
        }

        \If{$V\left(  G\right)  =\emptyset$}
            {
                $M$ is a unique perfect matching and $G$ is a K\"{o}nig-Egerv\'{a}ry graph.

            }
        \ElseIf{$G$ is a K\"{o}nig-Egerv\'{a}ry graph}
                {
                        The number of maximum matchings is greater than $1$. \\
                        \Else
                              {
                              Nothing specific can be said on the number of maximum matchings.
                              }
                }

}
\end{algorithm}

\begin{figure}[h]
\setlength{\unitlength}{1cm}
\begin{picture}(5,1.2)\thicklines
\multiput(0.5,0)(1,0){4}{\circle*{0.29}}
\multiput(0.5,1)(1,0){4}{\circle*{0.29}}
\put(0.5,0){\line(1,0){1}}
\put(0.5,0){\line(0,1){1}}
\put(0.5,0){\line(1,1){1}}
\put(1.5,0){\line(1,1){1}}
\put(1.5,0){\line(0,1){1}}
\put(2.5,0){\line(1,0){1}}
\put(2.5,0){\line(0,1){1}}
\put(2.5,0){\line(1,1){1}}
\put(3.5,0){\line(0,1){1}}
\put(2,1.5){\makebox(0,0){$G_{1}$}}
\multiput(4.8,0)(1,0){3}{\circle*{0.29}}
\multiput(4.8,1)(1,0){3}{\circle*{0.29}}
\put(4.8,0){\line(1,0){2}}
\put(4.8,0){\line(0,1){1}}
\put(5.8,0){\line(0,1){1}}
\put(5.8,1){\line(1,0){1}}
\put(5.8,1){\line(1,-1){1}}
\put(6.8,0){\line(0,1){1}}
\put(5.8,1.5){\makebox(0,0){$G_{2}$}}
\multiput(8,0)(1,0){5}{\circle*{0.29}}
\multiput(8,1)(1,0){5}{\circle*{0.29}}
\put(8,0){\line(1,0){1}}
\put(8,0){\line(0,1){1}}
\put(8,0){\line(1,1){1}}
\put(9,0){\line(0,1){1}}
\put(9,0){\line(1,1){1}}
\put(10,0){\line(0,1){1}}
\put(10,0){\line(1,0){2}}
\put(11,1){\line(1,0){1}}
\put(11,1){\line(1,-1){1}}
\put(11,0){\line(0,1){1}}
\put(11,0){\line(1,1){1}}
\put(12,0){\line(0,1){1}}
\put(10,1.5){\makebox(0,0){$G_{3}$}}
\end{picture}\caption{Both $G_{1}$ and $G_{2}$ are K\"{o}nig-Egerv\'{a}ry graph, but
only $G_{1}$ has a unique perfect matching.}
\label{fig123}
\end{figure}
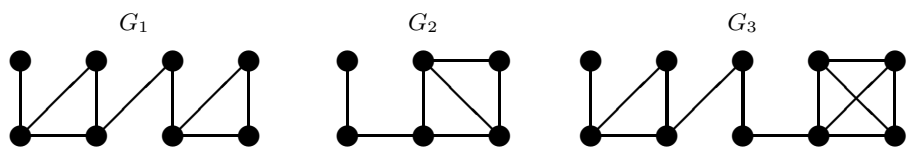

When Algorithm \ref{alg:UPM} is applied to the graphs from Figure \ref{fig123},
only for $G_{1}$ it ends with $V\left(  G\right)  =\emptyset$. On the other hand,
$G_{3}$ has perfect matchings, but it is not a K\"{o}nig-Egerv\'{a}ry graph.

Notice that there exist non-K\"{o}nig-Egerv\'{a}ry graphs having a
unique perfect matching, with or without leaves (for instance, the graphs
$G_{2},G_{3}$ in Figure \ref{fig1122}).

\begin{figure}[h]
\setlength{\unitlength}{1.0cm}
\begin{picture}(5,1.8)\thicklines
\multiput(0.5,0.5)(1,0){3}{\circle*{0.29}}
\put(2.5,1.5){\circle*{0.29}}
\put(0.5,0.5){\line(1,0){2}}
\put(1.5,0.5){\line(1,1){1}}
\put(2.5,0.5){\line(0,1){1}}
\put(1.5,0){\makebox(0,0){$G_{1}$}}
\multiput(3.5,0.5)(1,0){4}{\circle*{0.29}}
\multiput(4.5,1.5)(1,0){2}{\circle*{0.29}}
\put(3.5,0.5){\line(1,0){3}}
\put(3.5,0.5){\line(1,1){1}}
\put(4.5,0.5){\line(0,1){1}}
\put(5.5,0.5){\line(0,1){1}}
\put(5.5,1.5){\line(1,-1){1}}
\put(5,0){\makebox(0,0){$G_{2}$}}
\multiput(7.5,0.5)(1,0){5}{\circle*{0.29}}
\multiput(8.5,1.5)(1,0){2}{\circle*{0.29}}
\put(11.5,1.5){\circle*{0.29}}
\put(7.5,0.5){\line(1,0){4}}
\put(7.5,0.5){\line(1,1){1}}
\put(8.5,0.5){\line(0,1){1}}
\put(9.5,0.5){\line(0,1){1}}
\put(10.5,0.5){\line(-1,1){1}}
\put(11.5,0.5){\line(0,1){1}}
\put(9.5,0){\makebox(0,0){$G_{3}$}}
\end{picture}\caption{Each of the graphs $G_{1},G_{2},G_{3}$ {has a unique
perfect matching, but only }$G_{1}$ is a K\"{o}nig-Egerv\'{a}ry\ graph.}
\label{fig1122}
\end{figure}
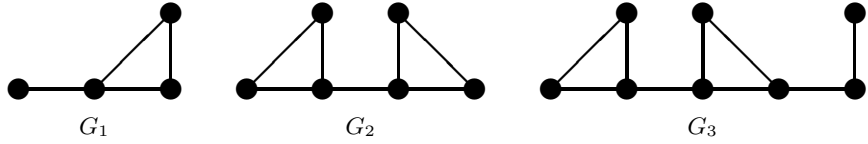

\begin{theorem}
\label{th1}Algorithm \ref{alg:UPM} ends with $V\left(  G\right)  =\emptyset$
if and only if $G$\ is a K\"{o}nig-Egerv\'{a}ry graph with a unique perfect matching.
\end{theorem}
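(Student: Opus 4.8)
The plan is to prove the two implications separately, doing the easier forward direction first. Suppose Algorithm \ref{alg:UPM} terminates with $V(G)=\emptyset$. Then it has removed the vertices in pairs $\{v_i,u_i\}$, where each $v_i$ was a leaf of the current graph at the moment it was processed and $u_i=N(v_i)$ its unique neighbour there; these pairs partition $V(G)$, so $M=\{v_iu_i\}$ is a perfect matching and $\mu(G)=\left\vert V(G)\right\vert /2$. I would first argue that $S=\{v_i\}$ is independent: for $i<j$ the vertex $v_j$ is still present when $v_i$ is processed, and since $v_i$ is then a leaf whose only neighbour is $u_i\neq v_j$, no edge $v_iv_j$ can exist in $G$. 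Hence $\alpha(G)\geq\left\vert S\right\vert =\left\vert V(G)\right\vert /2$, and combining with the general bound $\alpha(G)\leq\left\vert V(G)\right\vert -\mu(G)=\left\vert V(G)\right\vert /2$ gives $\alpha(G)+\mu(G)=\left\vert V(G)\right\vert$, so $G$ is K\"{o}nig-Egerv\'{a}ry. Uniqueness of $M$ follows by peeling: any perfect matching must match the original leaf $v_1$ to $u_1$, hence restricts to a perfect matching of $G-\{v_1,u_1\}$ in which $v_2$ is again a forced leaf, and so on, so $M$ is the only perfect matching (equivalently, invoke Theorem \ref{th9}).

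For the converse I would induct on $\left\vert V(G)\right\vert$. Let $G$ be K\"{o}nig-Egerv\'{a}ry with unique perfect matching $M$. A graph with a perfect matching has no isolated vertices, and by Lemma \ref{lem11} some leaf $v$ exists; its unique neighbour $u$ satisfies $vu\in M$. The crux is to show that $G'=G-\{v,u\}$ is again K\"{o}nig-Egerv\'{a}ry with a unique perfect matching, so that the inductive hypothesis (with base case the empty graph) forces the algorithm to empty $G$. That $M'=M\setminus\{vu\}$ is a perfect matching of $G'$ is immediate, giving $\mu(G')=\mu(G)-1$; it is unique because any perfect matching $M''$ of $G'$ would make $M''\cup\{vu\}$ a perfect matching of $G$, whence $M''\cup\{vu\}=M$. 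For the K\"{o}nig-Egerv\'{a}ry property I would prove $\alpha(G')=\alpha(G)-1$: the upper bound is the general inequality $\alpha(G')\leq\left\vert V(G')\right\vert -\mu(G')$, and the lower bound comes from taking a maximum independent set of $G$ containing $v$ (any maximum independent set can be modified to contain the leaf $v$, which forces $u$ out) and deleting $v$ to get an independent set of $G'$ of size $\alpha(G)-1$. Then $\alpha(G')+\mu(G')=(\alpha(G)-1)+(\mu(G)-1)=\left\vert V(G)\right\vert -2=\left\vert V(G')\right\vert$, as required.

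It then remains to verify that the algorithm, with its concrete queue discipline, actually realises this inductive removal rather than stalling. The invariant to maintain is that the queue always contains every leaf of the current graph, possibly alongside stale entries that the test $Vertex[v]=True$ discards. I would check this is preserved by a single removal: deleting $v$ and $u$ changes degrees only of neighbours of $u$ (since $v$'s sole neighbour is $u$), so every newly created leaf lies in $N(u)\setminus\{v\}=N(N(v))-v$, which is exactly the set the algorithm enqueues, while surviving old leaves remain in the queue. Crucially, removing the matched pair never isolates a vertex, since $G'$ still carries the perfect matching $M'$; hence by Lemma \ref{lem11} every nonempty current graph has a leaf already sitting in the queue, the loop keeps making progress, and finiteness yields termination with $V(G)=\emptyset$.

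The main obstacle is the inductive step of the converse: showing that leaf-and-neighbour deletion preserves both the exact identity $\alpha(G')=\alpha(G)-1$ and the uniqueness of the perfect matching, together with the bookkeeping that the queue never loses an available leaf. Uniqueness is precisely what makes the leaf removals confluent, so that the algorithm's otherwise arbitrary choice of leaf is irrelevant to the outcome; without it the process could halt on a nonempty leafless subgraph.
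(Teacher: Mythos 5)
Your proof is correct, but it takes a genuinely different route from the paper's in one direction and is more complete in the other. For the implication ``the algorithm empties $G$ $\Rightarrow$ $G$ is K\"{o}nig-Egerv\'{a}ry with a unique perfect matching,'' the paper inducts on the number of edges: it takes the first pendant edge $ab$ deleted, invokes the induction hypothesis on the residual graph, and extends via the count $\left\vert S\cup\{a\}\right\vert +\left\vert M\cup\{ab\}\right\vert \leq\alpha(G)+\mu(G)\leq\left\vert V(G)\right\vert$. You avoid induction altogether: the processed leaves $\{v_i\}$ form an independent set and the removed pairs a matching, both of size $\left\vert V(G)\right\vert /2$, which forces $\alpha(G)+\mu(G)=\left\vert V(G)\right\vert$ directly; this is self-contained and also sidesteps a small imprecision in the paper, namely that deleting the vertex $b$ can remove several edges at once, so the paper's induction really needs to be strong induction on ``at most $m$ edges.'' For the converse, both you and the paper peel leaves via Lemma \ref{lem11}, but where the paper writes ``Clearly, $G-\{a_{1},b_{1}\}$ is still a K\"{o}nig-Egerv\'{a}ry graph with the unique perfect matching,'' you actually prove this key claim: uniqueness by extending any residual perfect matching back to $G$, and the K\"{o}nig-Egerv\'{a}ry property via $\alpha(G')=\alpha(G)-1$ using the leaf-swap argument. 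You also verify the queue mechanics, which the paper ignores entirely; your observation that the residual perfect matching forbids isolated vertices is precisely what guarantees that queued leaves never go stale in this direction. One caveat you share with the paper: in the first direction you assume every popped queue entry is still a leaf (degree exactly one) when processed, whereas under a purely literal reading of the pseudocode (the only test being $Vertex[v]=True$) an entry can be popped after its unique neighbour has already been deleted --- e.g.\ on the path with three vertices the literal algorithm empties $V(G)$ while producing a matching of size one --- so both proofs tacitly read the algorithm as skipping such isolated entries. That is a defect of the pseudocode's statement, not of your argument relative to the paper's.
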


\begin{proof}
\textit{If part}. Let
\[
M=\{a_{i}b_{i}:1\leq i\leq\mu(G)\}
\]
be the unique
perfect matching of the K\"{o}nig-Egerv\'{a}ry graph $G$, and $S\in\Omega(G)$.
According to Lemma \ref{match},
\[
M\subseteq(S,V\left(  G\right)-S),
\]
and by Lemma \ref{lem11}, we may assume that
\[
a_{1}\in S\cap\mathrm{leaf}(G).
\]
Clearly,
\[
G-\{a_{1},b_{1}\}=G-N[a_{1}]
\]
is still a
K\"{o}nig-Egerv\'{a}ry graph with the unique perfect matching, $M-\{a_{1}b_{1}\}$.
Lemma \ref{lem11} assures that the graph $G-N[a_{1}]$ has one leaf
(at least), say $a_{2}$, and
\[
G-N[a_{1}]-N[a_{2}]
\]
is again a
K\"{o}nig-Egerv\'{a}ry graph having a unique perfect matching, namely,
\[
M-\{a_{1}b_{1},a_{2}b_{2}\}.
\]
Hence, repeating this procedure $\mu(G)$ times,
we finally arrive at $G=\emptyset$, i.e., Algorithm \ref{alg:UPM} correctly
finds the unique perfect matching of $G$.

\textit{Only if part}. We proceed by induction on $m=\left\vert E\left(
G\right)  \right\vert $.

The result is true for $m=1$.

Assume that the assertion holds for every graph on $m\geq1$ edges, and let
$G$\ be a graph on $m+1$ edges, for which Algorithm \ref{alg:UPM} ends with
$V\left(  G\right)  =\emptyset$.

If $a\in\mathrm{leaf}(G)$ and $ab\in E\left(
G\right)  $ is the first pendant edge that Algorithm \ref{alg:UPM} deletes
from $G$, then the remaining graph
\[
G-ab=\left(  V\left(  G\right)  -\left\{  a,b\right\}  ,E\left(  G\right)
-\left\{  ab\right\}  \right)  =\left(  W,U\right)
\]
has $m$ edges and Algorithm \ref{alg:UPM} ends with $W=\emptyset$.
Consequently, by the induction hypothesis, $G-ab$ is a K\"{o}nig-Egerv\'{a}ry
graph with a unique perfect matching, say $M$. Hence, there is no
$M$-alternating cycle in $G-ab$.

Clearly, $M\cup\left\{  ab\right\}$ is a
perfect matching in $G$, and it is unique, because the vertex $a\in V\left(
G\right)  $ can be saturated only by the pendant edge $ab$, and there do not exist
$M\cup\left\{  ab\right\}  $-alternating cycles in $G$. In addition, if
$S$ is a maximum independent set in $G-ab$, then $S\cup\left\{  a\right\}  $
is a maximum independent set in $G$, and thus
\[
\left\vert V\left(  G\right)  -\left\{  a,b\right\}  \right\vert
+2=\alpha\left(  G-ab\right)  +\mu\left(  G-ab\right)  +2=
\]
\[
=\left\vert S\cup\left\{  a\right\}  \right\vert +\left\vert M\cup\left\{
ab\right\}  \right\vert \leq\alpha\left(  G\right)  +\mu\left(  G\right)
\leq\left\vert V\left(  G\right)  \right\vert ,
\]
i.e., $G$ is a K\"{o}nig-Egerv\'{a}ry graph.
\end{proof}

\begin{corollary}
\label{cor}A K\"{o}nig-Egerv\'{a}ry graph $G$ has a unique perfect matching if
and only if Algorithm \ref{alg:UPM} ends with $V\left(  G\right)
=\emptyset$.
\end{corollary}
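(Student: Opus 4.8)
The plan is to read this off directly from Theorem \ref{th1}, since the corollary is merely the specialization of that theorem to the standing hypothesis that $G$ is a K\"{o}nig-Egerv\'{a}ry graph. Theorem \ref{th1} characterizes termination of Algorithm \ref{alg:UPM} with an empty vertex set by the conjunction of two conditions: that $G$ be a K\"{o}nig-Egerv\'{a}ry graph, and that it have a unique perfect matching. Once the first condition is assumed throughout, only the second remains as a genuine requirement, and the biconditional collapses to exactly the statement of the corollary.

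For the forward implication I would assume that the K\"{o}nig-Egerv\'{a}ry graph $G$ has a unique perfect matching; then both clauses on the right-hand side of Theorem \ref{th1} hold simultaneously, so the theorem guarantees that Algorithm \ref{alg:UPM} halts with $V(G)=\emptyset$. For the converse I would assume the algorithm halts with $V(G)=\emptyset$ and apply Theorem \ref{th1} in the other direction to conclude that $G$ is a K\"{o}nig-Egerv\'{a}ry graph possessing a unique perfect matching; discarding the (already assumed) K\"{o}nig-Egerv\'{a}ry part leaves exactly the uniqueness of the perfect matching that the corollary demands.

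I expect no real obstacle here: the substantive work has already been carried out in the two directions of Theorem \ref{th1}, and the corollary only repackages that result under an added hypothesis. The single point worth making explicit is that the biconditional of Theorem \ref{th1} survives intact when its right-hand side is intersected with the condition ``$G$ is a K\"{o}nig-Egerv\'{a}ry graph,'' which is precisely the content being asserted. Hence the proof reduces to two one-line invocations of Theorem \ref{th1}, one per direction.
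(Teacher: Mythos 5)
Your proposal is correct and matches the paper exactly: the paper states Corollary \ref{cor} without proof precisely because it is, as you observe, the immediate specialization of the biconditional in Theorem \ref{th1} to the standing hypothesis that $G$ is a K\"{o}nig-Egerv\'{a}ry graph. Both directions are the one-line invocations of Theorem \ref{th1} that you describe, so there is nothing to add.
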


Graphs $G_{2}$ from Figure \ref{fig1122} and $G_{3}$ from Figure \ref{fig123}
show that Algorithm \ref{alg:UPM} can not estimate the number of maximum
matchings of non-K\"{o}nig-Egerv\'{a}ry graphs.

It is known that if a graph does not possess a perfect matching, then it has two
maximum matchings at least. Consequently, by Corollary \ref{cor}, if $G$ is a
K\"{o}nig-Egerv\'{a}ry graph and Algorithm \ref{alg:UPM} returns
$V\neq\emptyset$, then its number of maximum matchings is greater than $1$.

Since every edge of the graph $G$\ is in use no more than twice in Algorithm
\ref{alg:UPM}, we obtain the following.

\begin{theorem}
Given a K\"{o}nig-Egerv\'{a}ry graph with $m$ edges, Algorithm \ref{alg:UPM}
decides whether it has a unique perfect matching in $O(m)$ time.
\end{theorem}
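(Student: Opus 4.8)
The plan is to keep correctness and running time separate, since the correctness half is already available. For a K\"onig-Egerv\'ary input $G$, Corollary \ref{cor} states that Algorithm \ref{alg:UPM} terminates with $V(G)=\emptyset$ exactly when $G$ has a unique perfect matching; hence inspecting the final value of $V(G)$ decides the question, and the set $M$ returned is that unique matching whenever $V(G)=\emptyset$. So I would dispose of correctness in one sentence and devote the proof to bounding the time by $O(m)$.

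First I would pin down data structures so that every primitive operation is $O(1)$: represent $G$ by adjacency lists, maintain an integer array $\deg[\,\cdot\,]$ of current degrees, the boolean array $Vertex[\,\cdot\,]$ of line \ref{line1}, and the leaf queue of line \ref{line1.1}. Constructing the adjacency lists and degrees, then scanning once to enqueue all vertices of degree $1$, costs $O(n+m)$. The loop test ``$v\in V(G)$'' becomes the single lookup $Vertex[v]=True$; this check matters because a vertex enqueued as a leaf may be deleted (as the mate of an earlier leaf) before it is dequeued, and such stale entries must be discarded in $O(1)$.

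The heart of the argument is the loop body. When a live leaf $v$ with unique neighbour $w=N(v)$ is processed, I mark both $v$ and $w$ deleted by setting their $Vertex[\,\cdot\,]$ flags to $False$, append $vw$ to $M$, and then scan the adjacency list of $w$ exactly once: for each still-live $c\in N(w)$ I decrement $\deg[c]$ and, whenever $\deg[c]$ drops to $1$, enqueue $c$. Since $v$ is a leaf, its only incident edge is $vw$, already accounted for when scanning $w$, so $v$'s list is never scanned. As each vertex is deleted at most once and its list is touched only at deletion, the total work over all loop bodies is at most $\sum_{w}\deg_G(w)=2m$, i.e. every edge is examined at most twice, once from each endpoint. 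This yields $O(m)$ for the loop.

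Putting initialization and the loop together gives $O(n+m)$, and since any graph with a perfect matching has no isolated vertex, every vertex has degree at least $1$, so $n\le 2m$ and $O(n+m)=O(m)$; isolated vertices, which preclude a perfect matching, are caught during the degree computation and reported at once. I expect the one delicate point to be the new-leaf detection: the linear bound survives only if degrees are updated incrementally and each deleted vertex's list is scanned a single time, whereas recomputing $\mathrm{leaf}(G)$ from scratch after every deletion would cost $\Theta(m)$ per step and break the bound.
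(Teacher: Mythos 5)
Your proposal is correct and follows essentially the same route as the paper: correctness is delegated to Corollary \ref{cor}, and the time bound rests on the observation that every edge is examined at most twice (once from each endpoint), which is precisely the paper's one-sentence justification preceding the theorem. Your version merely makes explicit the data structures (degree array, stale-entry check via $Vertex[\,\cdot\,]$, single scan of each deleted vertex's adjacency list) and the $n\le 2m$ remark that the paper leaves implicit.
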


Clearly, if the cycle of a unicyclic graph $G$ is even, then $G$ is bipartite, and hence it is a
K\"{o}nig-Egerv\'{a}ry graph.

\begin{proposition}
If a unicyclic non-bipartite graph $G$ has $M$ as a perfect matching, then $M$
is unique and $G$ is a K\"{o}nig-Egerv\'{a}ry graph.
\end{proposition}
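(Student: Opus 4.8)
The plan is to show two things: first that the perfect matching $M$ must be unique, and second that uniqueness forces $G$ to be a K\"{o}nig-Egerv\'{a}ry graph. The natural tool for the first part is Theorem~\ref{th9}: a perfect matching is unique precisely when there is no $M$-alternating cycle. Since $G$ is unicyclic, the only candidate for an $M$-alternating cycle is the unique cycle $C$ of $G$. Now $G$ is non-bipartite, so $C$ must be odd; but every $M$-alternating cycle has even length, so $C$ cannot be $M$-alternating. Hence $G$ has no $M$-alternating cycle at all, and by Theorem~\ref{th9} the matching $M$ is uniquely restricted, which for a perfect matching means it is the unique perfect matching of $G$.

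For the second part, I would exploit the structure of a unicyclic graph: deleting one edge of the unique odd cycle $C$ yields a tree (or, in the disconnected case, a forest), which is bipartite. So the strategy is to relate $\alpha(G)+\mu(G)$ to the bipartite situation. One clean route is to argue directly that $\alpha(G) \geq |V(G)| - \mu(G) = |V(G)|/2$, since the reverse inequality $\alpha(G)+\mu(G)\le |V(G)|$ always holds and $\mu(G)=|V(G)|/2$ because $M$ is perfect. Thus it suffices to exhibit an independent set of size $|V(G)|/2$. To build it, I would use the unique perfect matching $M$ together with the absence of $M$-alternating cycles: picking, from each matched edge, an appropriate endpoint so that no two chosen vertices are adjacent. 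Because every edge not in $M$ joins two $M$-edges, and there is no alternating cycle, one can orient the selection consistently along the tree-like part and handle the single cycle edge separately.

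Concretely, I would consider a leaf of $G$ (a unicyclic graph with a perfect matching has a pendant structure off the cycle, or else is the cycle alone plus chords, but unicyclicity rules out chords). Then I would induct or peel: remove a pendant edge $ab$ with $a$ a leaf, put $a$ into the independent set, and recurse on $G-\{a,b\}$, which is again unicyclic-or-tree with a unique perfect matching. The base case is the odd cycle $C_{2k+1}$ itself — but that has no perfect matching, so the recursion must eventually reduce to a tree, at which point bipartiteness gives $\alpha+\mu=|V|$ immediately. Assembling the pieces, each peeling step adds one vertex to the independent set while removing two vertices, preserving the ratio, so the final independent set has size exactly $|V(G)|/2$, establishing $\alpha(G)+\mu(G)=|V(G)|$.

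The main obstacle I anticipate is the bookkeeping in the second part: one must verify that the greedy/peeling selection really produces an \emph{independent} set and that the reduced graphs stay unicyclic (or become trees) with unique perfect matchings, so that the induction hypothesis applies. The first part is essentially immediate once the parity observation is made — an odd cycle cannot be $M$-alternating — so the real content lies in the K\"{o}nig-Egerv\'{a}ry claim. A cleaner alternative, which I would try first to avoid the casework, is to invoke the earlier machinery: since $M$ is the unique perfect matching and $G-e$ (for $e$ the cycle edge) is bipartite with $M$ still a unique perfect matching, Lemma~\ref{lem1} and the bipartite theory give $\alpha(G-e)+\mu(G-e)=|V(G)|$, and adding back the single edge $e$ cannot decrease $\alpha$ by the structure of $M$, yielding the K\"{o}nig-Egerv\'{a}ry equality for $G$ directly.
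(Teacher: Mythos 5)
Your first part coincides with the paper's: the unique cycle is odd, every $M$-alternating cycle is even, so Theorem~\ref{th9} gives uniqueness of $M$. For the K\"{o}nig-Egerv\'{a}ry claim, your main (peeling) argument is correct but takes a genuinely different route. The paper makes a one-shot decomposition: it sets $M_{C}$ to be the matched edges meeting the cycle, takes $H$ to be the subgraph induced by the vertices they saturate, proves $H$ is a K\"{o}nig-Egerv\'{a}ry graph via Theorem~\ref{th1} (correctness of Algorithm~\ref{alg:UPM}), and then glues in maximum independent sets of the tree components of $G-H$, chosen by Theorem~\ref{th2} ($\mathrm{core}(T)=\emptyset$) so as to miss the unique attachment vertex of each tree. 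Your induction avoids this decomposition entirely: peel a pendant edge $ab$ with $a$ a leaf (necessarily $ab\in M$, since $M$ is perfect), put $a$ into the independent set, and recurse on $G-\{a,b\}$. The bookkeeping you flag as the main obstacle does go through: $a$'s only neighbor is deleted with it, so independence is automatic; any perfect matching of $G-\{a,b\}$ extends by $ab$ to one of $G$, so the unique perfect matching is inherited; vertex deletion creates no cycles, so the residual graph stays unicyclic-or-forest; a leaf always exists, because a connected graph with exactly one cycle and minimum degree at least $2$ is that cycle, and an odd cycle has no perfect matching; and the K\"{o}nig-Egerv\'{a}ry property is additive over components, so disconnection is harmless. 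Each step trades two deleted vertices for one independent vertex, and the forest base case is bipartite, hence K\"{o}nig-Egerv\'{a}ry, giving $\alpha(G)\geq\left\vert V(G)\right\vert /2$ as you say. This is arguably more elementary than the paper's proof: it needs only the bipartite K\"{o}nig-Egerv\'{a}ry theorem, not Theorem~\ref{th1} or Theorem~\ref{th2}.

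By contrast, the ``cleaner alternative'' you say you would try first is the one with a real gap. Choosing $e\in E(C)-M$ (possible since $C$ is odd, so $M$ cannot contain a near-perfect matching of all of $C$'s edges) does make $G-e$ a forest in which $M$ is still the unique perfect matching, and bipartiteness alone gives $\alpha(G-e)+\mu(G-e)=\left\vert V(G)\right\vert$; Lemma~\ref{lem1} is not needed for that. But ``adding back the single edge $e$ cannot decrease $\alpha$ by the structure of $M$'' is an assertion, not a proof: adding an edge can very well decrease the independence number, and nothing in Lemma~\ref{lem1} rules this out. What is needed is a maximum independent set of $G-e$ omitting an endpoint of $e=xy$. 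Since the component of $G-e$ containing $x$ and $y$ is a tree with a perfect matching, Theorem~\ref{th2} gives $\mathrm{core}=\emptyset$, hence a maximum independent set avoiding $x$; that set is still independent in $G$, so $\alpha(G)=\left\vert V(G)\right\vert /2$. In other words, the shortcut can be repaired, but only by invoking exactly the core machinery (Theorem~\ref{th2}) that the paper itself uses; as written, that final step is a gap.
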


\begin{proof}
Let $C$ be the unique cycle of $G$. Since $C$ is an odd cycle, Theorem
\ref{th9} ensures that $M$ is unique.

Notice that in order to show that $G$ is a K\"{o}nig-Egerv\'{a}ry graph it is
enough to prove that $G$ has an independent set of size equal to $\left\vert
M\right\vert $.

Let
\[
M_{C}=\left\{  uv\in M:\left\{  u,v\right\}  \cap V\left(  C\right)
\neq\emptyset\right\},
\]
and $H$ be the subgraph of $G$ induced by the
vertices saturated by $M_{C}$.

Since $C$ is odd and $M$ covers all the
vertices of $G$, it follows that $M_{C}$ is a (unique) perfect matching in $H$.
In addition, $H$ has one leaf at least, say $a$. Removing the vertex $a$
together with its neighbor may be considered as the first step of Algorithm
\ref{alg:UPM}. Clearly, $H-a$ is a forest with a perfect matching.
Consequently, by Theorem \ref{th1} Algorithm \ref{alg:UPM} terminates with
empty graph, which, in turn, means that $H$ is a K\"{o}nig-Egerv\'{a}ry graph.

Each connected component of the graph $G-H$ is a tree $T$ with a perfect
matching. Consequently, if $uv\in E$ is such that $u\in V\left(  H\right)  $
and $v\in V\left(  T\right)  $, then, by Theorem \ref{th2}, claiming in our
case that \textrm{core}$(T)=\emptyset$, there must be a maximum independent
set $S_{T}$ in $T$ with $v\notin S_{T}$. If $\Gamma$ denotes the family of all
connected components of the graph $G-H$, we obtain that
\[
A=S_{H}\cup\left(
{\displaystyle\bigcup}
\left\{  S_{T}:T\in\Gamma\right\}  \right)
\]
is an independent set of $G$,
and $\left\vert A\right\vert =\left\vert M\right\vert $. Therefore, $G$ is a
K\"{o}nig-Egerv\'{a}ry graph.
\end{proof}

\begin{corollary}
If $G$ is unicyclic and has $m$ edges, then Algorithm \ref{alg:UPM} decides
whether it has a unique perfect matching in $O(m)$ time.
\end{corollary}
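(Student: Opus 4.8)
The plan is to reduce to the K\"{o}nig-Egerv\'{a}ry case already settled by Theorem \ref{th1} and Corollary \ref{cor}, the only genuine work being to rule out the possibility that a unicyclic graph could have a unique perfect matching without being a K\"{o}nig-Egerv\'{a}ry graph. First I would record the key structural observation: \emph{if $G$ is unicyclic and admits a unique perfect matching, then $G$ is necessarily a K\"{o}nig-Egerv\'{a}ry graph}. To see this, split on the parity of the unique cycle $C$. If $C$ is even, then $G$ is bipartite, and hence a K\"{o}nig-Egerv\'{a}ry graph, since every bipartite graph is. If $C$ is odd, then $G$ is non-bipartite, and the preceding Proposition applies: any perfect matching of such a $G$ is automatically unique and forces $G$ to be a K\"{o}nig-Egerv\'{a}ry graph. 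In either case the conclusion follows, so no unicyclic graph with a unique perfect matching can escape the K\"{o}nig-Egerv\'{a}ry class.

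With this in hand, the correctness of the decision procedure is immediate. On the one hand, if Algorithm \ref{alg:UPM} terminates with $V\left(G\right)=\emptyset$, then Theorem \ref{th1} guarantees that $G$ is a K\"{o}nig-Egerv\'{a}ry graph with a unique perfect matching; in particular $G$ has a unique perfect matching. On the other hand, if $G$ has a unique perfect matching, the structural observation shows that $G$ is a K\"{o}nig-Egerv\'{a}ry graph, so Corollary \ref{cor} yields that Algorithm \ref{alg:UPM} terminates with $V\left(G\right)=\emptyset$. Hence, for unicyclic inputs, the algorithm ends with the empty graph exactly when $G$ possesses a unique perfect matching, which is precisely the required decision.

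Finally, the running time needs no new argument: as already noted in the analysis preceding this corollary, every edge of $G$ is examined at most twice during the leaf-removal loop, so Algorithm \ref{alg:UPM} runs in $O(m)$ time on any input, unicyclic or not. Combining the correctness statement with this time bound gives the corollary.

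I expect the only delicate point to be the structural observation of the first paragraph: one must be careful that Theorem \ref{th1}, stated only for K\"{o}nig-Egerv\'{a}ry graphs, does not silently misclassify a unicyclic graph that has a unique perfect matching yet fails to be a K\"{o}nig-Egerv\'{a}ry graph. The even-cycle case is handled by bipartiteness and the odd-cycle case is exactly the content of the preceding Proposition, so this gap is closed and the reduction to the K\"{o}nig-Egerv\'{a}ry results is fully legitimate.
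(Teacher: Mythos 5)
Your proposal is correct and follows exactly the route the paper intends: the corollary is left unproved there precisely because it follows from the parity split (even cycle $\Rightarrow$ bipartite $\Rightarrow$ K\"{o}nig-Egerv\'{a}ry, odd cycle $\Rightarrow$ the preceding Proposition), combined with Theorem \ref{th1}, Corollary \ref{cor}, and the $O(m)$ edge-usage bound. Your explicit statement of the structural observation and the check that the algorithm cannot misclassify a unicyclic input merely spell out what the paper treats as immediate.
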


\section{Conclusions}

In this paper we have validated Conjecture \ref{conj} claiming that a unique
perfect matching, if it exists, can always be found in $O(m)$ time, for both
K\"{o}nig-Egerv\'{a}ry graphs and unicyclic graphs.

Very well-covered graphs, a subclass of K\"{o}nig-Egerv\'{a}ry graphs with
perfect matchings, can be recognized in polynomial time. Namely, to recognize
a graph as being very well-covered, we just need to show that it has a perfect
matching $M$ such that for every edge $xy\in M$: $N(x)\cap N(y)=\emptyset$,
and each $v\in N(x)-\{y\}$ is adjacent to all vertices of $N(y)-\left\{
x\right\}  $ \cite{Favaron}. To check this property one has to handle
$O\left(  n^{3}\right)  $ pairs of vertices in the worst case. Recently, very
well-covered graphs with unique perfect matching proved their importance in
\cite{LevMan2011,LevMan2012}. It is an open problem to recognize a
very well-covered graph with a unique perfect matching, faster than in
$O\left(  n^{3}\right)$ time, when the input is a general graph.

\end{document}